\newcommand{\ket}[1]{\left| #1 \right\rangle}
\newcommand{\bra}[1]{\left\langle #1 \right|}
\newcommand{\braket}[2]{\left\langle #1 \big| #2 \right\rangle}                 
\newcommand{\bracket}[3]{\left\langle #1 \big| #2 \big| #3 \right\rangle}       
\newcommand{\e}{\mathrm{e}}
\renewcommand{\exp}[1]{\mathrm{exp}\left( #1 \right)}
\renewcommand{\i}{\mathrm{i}}
\newcommand{\set}[1]{\left\{ #1 \right\}}
\newcommand{\paren}[1]{\left( #1 \right)} 
\newtheorem{lemma}{Lemma}
\begin{document}

\title{Quantum walk-based search algorithms with multiple marked vertices}
\author{G. A. Bezerra, P. H. G. Lug\~ao, and R. Portugal}
\affiliation{%
 National Laboratory of Scientific Computing (LNCC)\\
 Petr\'opolis, RJ, 25651-075, Brazil 
}%

\begin{abstract}
The quantum walk is a powerful tool to develop quantum algorithms, which usually are based on searching for a vertex in a graph with multiple marked vertices, Ambainis's quantum algorithm for solving the element distinctness problem being the most shining example. 
In this work, we address the problem of calculating analytical expressions of the time complexity of finding a marked vertex using quantum walk-based search algorithms with multiple marked vertices on arbitrary graphs, extending previous analytical methods based on Szegedy's quantum walk, which can be applied only to bipartite graphs. Two examples based on the coined quantum walk on two-dimensional lattices and hypercubes show the details of our method.
\end{abstract}
\maketitle

\section{Introduction}

The discrete-time quantum walk is the quantum counterpart of the classical random walk. In the classical case, the state of a random walker is a probability distribution, and its dynamics is described by a stochastic matrix (acting upon the state), which is obtained from the adjacency matrix of the graph. The adjacency matrix ensures that the random walk obeys locality constraints, which means that if the walker is on a vertex $v$ at time $t$, the walker will hop to some vertex in the neighborhood of $v$ at time $t+1$~\cite{MR96}. In the quantum case, the state of a quantum walker is a $L_2$ norm-1 vector in a Hilbert space, and its dynamics is described by unitary matrices, as demanded by the laws of quantum mechanics, but not only that, the unitary matrices must be \textit{local}. The locality is defined by some discrete structure, which characterizes a neighborhood for each allowed location for the quantum walker on that discrete structure~\cite{Por18book}. Most papers in literature employ graphs and the allowed locations are vertices, edges, arcs, or faces, depending on the quantum walk model~\cite{KPSS18}. The model is a recipe that provides local unitary operators, and the product of those operators is the evolution operator of the model. The quantum walk is not only an important tool to build quantum algorithms that outperform their classical counterparts~\cite{Zho21}, but also a versatile toy model useful to simulate and analyze complex physical systems~\cite{MRLA08,Ven12}.

In 2002, Benioff~\cite{Ben02} came up with the idea of quantum robots searching a two-dimensional lattice for a specific site, which inspired many researchers to use quantum walks for searching algorithms. In the same year, Shenvi~\textit{et al.}~described a coined quantum walk-based search algorithm on hypercubes with a quadratic improvement over a classical random walk-based algorithm. Ambainis \textit{et al.}~described a similar search algorithm on two-dimensional lattices~\cite{AKR05}, which was improved by Tulsi in 2008 by adding a qubit to the model~\cite{Tul08}. Tulsi's modification proved useful for other graphs~\cite{Tul12}.

The first quantum walk-based search algorithm (on a specific bipartite graph) with multiple marked vertices was designed by Ambainis~\cite{Amb07a} in 2003, and in this case, the searching has a practical application for solving the element distinctness problem. The algorithm time-complexity is calculated via a reduction method, which converts the marked set into only one marked vertex in a reduced graph.
Ambainis's quantum walk was extended to arbitrary symmetric bipartite graphs with multiple marked vertices by Szegedy~\cite{Sze04a}, who was able to obtain a quadratic improvement for the detection problem, which aims to determine whether there is at least one marked vertex in the graph. The searching problem, which aims to determine where is the location of at least one marked vertex in the graph, cannot be solved with an quadratic speedup on arbitrary graphs, but can be solved with a quadratic speedup on \textit{bipartite graphs} by using a combination of Szegedy's quantum walk, the phase-estimation algorithm, and interpolated quantum walks~\cite{KMOR16,AGJK20}. In technical terms, Szegedy showed that the quantum hitting time of a quantum walk on a bipartite graph is~$\sqrt{h}$, where $h$ is the hitting time of a classical Markov chain on the underlying graph. In Szegedy's model, the underlying graph can be any graph, but the graph on which the quantum walk takes place must be bipartite.

The coined model~\cite{AAKV01} is a recipe to define quantum walks on graphs by extending the position space with an internal coin space. The dimension of the coin-position Hilbert space is strictly larger than the number of vertices. This coin extension can be understood via graph theory as a modification of the graph itself by inserting for each vertex a clique whose size is equal to the degree of the vertex~\cite{Por16}, so that the number of vertices of the extended clique-inserted graph~\cite{ZCC09} is equal to the dimension of the original coin-position Hilbert space. By inserting cliques, the extended graph is nonbipartite if there is at least one degree-3 vertex in the original graph, for instance, the $n$-dimensional hypercube with $n>2$. This means that the results about finding at least one marked vertex on bipartite graphs help little for the coined model.

Many papers have addressed the searching problem with multiple marked vertices using the coined model~\cite{WS17,AP18,LS20,GNBK021}, and most of them show negative results by displaying ``exceptional configurations'' of marked vertices that hinder the coined quantum walk-based searching algorithms. Ref.~\cite{AP18} shows how to eliminate exceptional configuration by using Tulsi's modification on regular graphs~\cite{Tul08}. It is not known whether those unwelcome configurations are really exceptional or in fact common in the standard coined model. To answer this kind of question we need analytical methods to address this problem. In this work, we describe an analytical framework for determining the time complexity of discrete-time quantum walk-based searching algorithms on arbitrary graphs with multiple marked vertices, which can be applied not only to the coined model but also to any discrete-time quantum walk. The standard dynamics of searching algorithms is based on a modification the underlying evolution operator $U$ of the quantum walk by multiplying $U$ by an unitary operator $R$, usually called \textit{oracle}, that knows the locations of the marked vertices, so that the new evolution operator $U'$ is $U\cdot R$.  Our framework uses two eigenvectors of $U'$ whose eigenvalues are closest to 1 but different from 1, extending a similar method that was successfully used to analyze quantum walk-based search algorithms on many graphs with \textit{only} one marked vertex~\cite{SKW03,AKR05,Por18book}.

We provide two examples of our method. We calculate analytically the time complexity of quantum walk-based search algorithms on two-dimensional lattices and hypercubes with two marked vertices each, using the coined model with the Grover coin. We show that the asymptotic optimal running time and the success probability depend on the relative distance of the marked vertices for the two-dimensional lattice. The speed of the algorithm is slower when the marked vertices are neighbors. For the $n$-dimensional hypercube, the asymptotic optimal running time is $\pi\sqrt{2^n}/4$ and the success probability is $1/2$, regardless the locations of the marked vertices. The calculations can be extended to more marked vertices, and numerical experiments show that the quantum-walk based search algorithm on the hypercube is similar to Grover's algorithm~\cite{Gro97} in the sense that the optimal running time $t_m$ for $m$ marked vertices is $t_m=t_1/\sqrt{m}$ with success probability 1/2 (no dependence on $m$). 

The structure of the paper is as follows. Sec.~\ref{ABA} describes the method to determine the time-complexity of quantum walk-based search algorithms on arbitrary graphs with multiple marked vertices, and gives all the details when there are two marked vertices.
Sec.~\ref{sec:aba_MalhaFinita} applies the method to two-dimensional lattices with two marked vertices.
Sec.~\ref{sec:Hypercube} applies the method to hypercubes with two marked vertices. 
Sec.~\ref{sec:hypercubeM3} shows how numerical methods can be improved.
Sec.~\ref{sec:conc} presents our conclusions.

\section{Time complexity of search algorithms with multiple marked vertices} \label{ABA}

Consider a graph $\Gamma$, where $V(\Gamma)$ is the set of vertices and $\left|V(\Gamma)\right|=N$. Let ${\mathcal{H}}^N$ be the $N$-dimensional Hilbert space associated with the graph, that is, the computational basis of ${\mathcal{H}}^N$ is $\{\ket{v}\,:\, v\in V(\Gamma)\}$. Although the dimension of Hilbert space is equal to the number of vertices, the results of this section can be applied to the coined model, as we show in concrete examples.

Let $M$ be the set of marked vertices. Then, the unitary operator that marks a vertex $v$ is
\begin{equation}\label{eq:sqw_R_multimarked}
R\,=\,I-2\sum_{v\in M}\ket{v}\bra{v}.
\end{equation} 
The evolution operator $U'$ of a \textit{quantum walk-based search algorithm}  is
\begin{equation}
U'\,=\,UR.
\end{equation}
The walker starts at an initial state $\ket{\psi(0)}$ and evolves driven by $U'$, that is, the walker's state after $t$ steps is $\ket{\psi(t)}=(U')^t\ket{\psi(0)}$.

The probability of finding a marked vertex $m\in M$ after $t$ steps is

\begin{equation}\label{eq:sqwpoft1}
	p(t)=\sum_{m\in M}\left|\bracket{m}{\left(U'\right)^t}{\psi(0)}\right|^2.
\end{equation}
The goal now is to determine the optimal number of steps $t_\textrm{opt}$, which is the one that maximizes $p(t)$. The running time is $t_\textrm{opt}$ and the success probability is $p(t_\textrm{opt})$. 

Let $\ket{\lambda}$ and $\ket{\lambda'}$ be the eigenvectors of $U'$ that have the eigenvalues $\textrm{e}^{\textrm{i}\lambda}$ and  $\textrm{e}^{\textrm{i}\lambda'}$ as close as possible to 1, but different from 1. The eigenspace spanned by the other eigenvectors will be disregarded, which cause some supposedly small error. Then
\begin{align}\label{eq:sqw_p(t)}
p(t) =\sum_{m\in M}&\Big|\textrm{e}^{\textrm{i}\lambda t}\,\braket{m}{\lambda}\braket{\lambda}{\psi(0)}+ \nonumber\\ &\textrm{e}^{\textrm{i} \lambda' t}\,\braket{m}{\lambda'}\braket{\lambda'}{\psi(0)} +\epsilon_m \Big|^2,
\end{align}
where $\epsilon_m=\bracket{m}{U_{\textrm{tiny}}^t}{\psi(0)}$, and $U_{\textrm{tiny}}$ acts non-trivially only on the subspace orthogonal to the plane spanned by $\left\{\ket{\lambda},\ket{\lambda'}\right\}$. Our approach can be applied when $|\epsilon_m|$ is much smaller than the absolute value of the remaining terms in the asymptotic limit (large $N$). We disregard $\epsilon_m$ for now and show applications for which $\lim_{N\rightarrow\infty}|\epsilon_m|=0$.

    Let us start by calculating ${\lambda}$ and $\lambda'$. Suppose that $\{\ket{\psi_k}\}$ is an orthonormal eigenbasis of $U$ and $\exp{\textrm{i}\phi_k}$ the corresponding eigenvalues.  Then, 
\begin{equation}\label{bra00lambda}
 \braket{m}{\lambda}\,=\,\sum_{k} \braket{m}{\psi_{k}}\braket{\psi_{k}}{\lambda},
\end{equation}
where $m\in M$. Using $\bracket{\psi_{k}}{U'}{\lambda}=\bracket{\psi_{k}}{U R}{\lambda}$ and supposing that $\lambda\neq \phi_{k}$, $\forall k$, we obtain 
\begin{equation}\label{psi_lambda}
 \braket{\psi_{k}}{\lambda}\,=\,(1+\text{i}b^\lambda_k){\sum_{m\in M}\braket{\psi_{k}}{m}\braket{m}{\lambda}},
\end{equation}
where 
\begin{equation}
b^\lambda_k\,=\,\frac{\sin(\lambda-\phi_k)}{1-\cos(\lambda-\phi_k)}.
\end{equation}
Replacing Eq.~(\ref{psi_lambda}) into Eq.~(\ref{bra00lambda}), we learn that the $|M|$-vector with entries $\braket{m}{\lambda}$ is a $0$-eigenvector of the $|M|$-dimensional Hermitian matrix $\Lambda^\lambda$, where 
\begin{equation}\label{eq:Lambda}
\Lambda^\lambda_{m m'}\,=\, \sum_k b^\lambda_k\,{\braket{m}{\psi_k}}\braket{\psi_{k}}{m'}.
\end{equation}
Then,
\begin{equation}\label{det_C}
\det\left(\Lambda^\lambda\right)\,=\,0.
\end{equation}
We use this equation to find $\lambda$ and the analog equation $\det(\Lambda^{\lambda'})=0$ to find $\lambda'$. Terms $\braket{m}{\lambda}$ and $\braket{m}{\lambda'}$ are calculated up to their norms using the fact that they are $0$-eigenvectors of $\Lambda^\lambda$ and $\Lambda^{\lambda'}$, respectively. The norms of these $0$-eigenvectors are calculated using Eq.~(\ref{psi_lambda}) and the constraint $\sum_{k}  \left|\braket{\psi_{k}}{\lambda}\right|^2=1$. The last missing terms, $\braket{\lambda}{\psi(0)}$ and $\braket{\lambda'}{\psi(0)}$, can be calculated using Eq.~(\ref{psi_lambda}) by assuming that $\ket{\psi(0)}$ is a uniform superposition of the $(+1)$-eigenvectors of $U$ that have nonzero overlap with the target states. 

There are cases so that $\lambda=-\lambda'$ and $\braket{m}{\lambda}\braket{\lambda}{\psi(0)}$~= $-\braket{m}{\lambda'}\braket{\lambda'}{\psi(0)}$ forall $m$, and Eq.~(\ref{eq:sqw_p(t)}) simplifies in the asymptotic limit to
\begin{align}\label{eq:p(t)}
    p(t) = p_\text{succ} \sin^2\lambda t,
\end{align}
where
\begin{equation}\label{eq:p_succ}
    p_\text{succ}\,=\,4 \sum_{m \in M} \left| \braket{m}{\lambda}\right|^2 \left|\braket{\lambda}{\psi(0)}\right|^2.
\end{equation}
In those cases, we know that the optimal running time is $t_\text{opt}=\pi/(2\lambda)$ and the success probability is $p_\text{succ}$. The time complexity is determined by the asymptotic behavior of  $t_\text{run}=t_\text{opt}/\sqrt{p_\text{succ}}$ as a function of the number of vertices because, in the quantum case after using the amplitude amplification method~\cite{BHMT02}, $t_\text{run}$ is the total running time with success probability $O(1)$.

The results described above can be used not only for analytical calculations but also to speedup numerical methods. We can use Eq.~(\ref{det_C}) to find $\lambda$ numerically for an specific configuration of marked vertices, and then by plotting $t_\text{opt}=\pi/(2\lambda)$ as a function of $N$, we can estimate the complexity of the running time. On the other hand, Eq.~(\ref{eq:p_succ}) can be used to generate a plot of the success probability as a function of $N$. The combination of those plots are used to determine the time complexity of the search algorithm. An example of this numerical approach is shown in Sec.~\ref{sec:hypercubeM3}.

\subsection*{Case $|M|=2$}

Suppose that $M=\{m_{0},m_{1}\}$. Eq.~(\ref{det_C}) implies that 
\begin{equation}\label{ckkpsum}
\sum_{k k'}b^\lambda_k b^\lambda_{k'}c_{k k'}=0,
\end{equation}
where
\begin{align}\label{ckkp}
c_{k k'}\,=\,\psi_k(m_{0})\psi_{k'}(m_{1})\Big(&\psi^*_k(m_{0})\psi^*_{k'}(m_{1})-\nonumber\\ &\psi^*_k(m_{1})\psi^*_{k'}(m_{0})\Big).
\end{align}

Suppose that $\lambda\ll \phi_\textrm{min}$ when $N\gg 1$, where $\phi_\textrm{min}$ is the smallest positive value of $\phi_{k}$. We will check the validity of those assumptions in specific applications. Let us split the sum~(\ref{ckkpsum}) into four parts
\begin{equation}\label{exactsum_kl_2}
 \sum_{k k'}=\sum_{\substack{\phi_k=0\\ \phi_{k'}=0}}+\sum_{\substack{\phi_k=0\\ \phi_{k'}\neq 0}}+\sum_{\substack{\phi_k\neq 0\\ \phi_{k'}=0}}+\sum_{\substack{\phi_k\neq 0\\ \phi_{k'}\neq 0}},
\end{equation} 
corresponding to the sum of terms such that $\phi_k=0$ or $\phi_k\neq 0$. Using 
\begin{equation}\label{eq:sinlam}
 \frac{\sin {\lambda}}{1-\cos \lambda } = \frac{2}{\lambda}+O(\lambda)
\end{equation}
and, if $\phi_k\neq 0$,
\begin{equation}\label{eq:sinlamphi}
 \frac{\sin({\lambda-\phi_{k}})}{1-\cos(\lambda-\phi_{k})} = a_k{\sin{\phi_{k}}}+a_k{\lambda}+O(\lambda^2),
\end{equation}
where
\begin{equation}\label{eq:a_k}
a_k \,=\, \frac{1}{\cos\phi_k-1},
\end{equation}
we obtain 
\begin{equation}\label{eq:lambdaequation}
	\frac{A}{\lambda^2}+\frac{B}{\lambda}+C+D\lambda+E\lambda^2 = O(\lambda^3),
\end{equation}
where
\begin{eqnarray}
	A &=&  4\sum_{\substack{\phi_k=0\\ \phi_{k'}=0}} c_{k k'}  ,\label{eq:sqwA} \\
	B &=& 2\sum_{\substack{\phi_k=0\\ \phi_{k'}\neq 0}}a_{k'} c_{k k'}\sin\phi_{k'} +2\sum_{\substack{\phi_k\neq 0\\ \phi_{k'}=0}}a_{k} c_{k k'}\sin\phi_k,\hspace{15pt}\label{eq:sqwB} \\
	C &=& 2\sum_{\substack{\phi_k=0\\ \phi_{k'}\neq 0}}a_{k'} c_{k k'} +2\sum_{\substack{\phi_k\neq 0\\ \phi_{k'}=0}}a_{k} c_{k k'} 	+\nonumber\\ 
	&&\hspace{20pt}\sum_{\substack{\phi_k\neq 0\\ \phi_{k'}\neq 0}}\,a_{k}\,a_{k'} c_{k k'}\sin\phi_k\sin\phi_{k'},\label{eq:sqwC} \\
	D &=& \sum_{\substack{\phi_k\neq 0\\ \phi_{k'}\neq 0}}a_{k}a_{k'} c_{k k'}\left(\sin\phi_k+ \sin\phi_{k'}\right),\label{eq:sqwD} \\
	E &=& \sum_{\substack{\phi_k\neq 0\\ \phi_{k'}\neq 0}}a_k a_{k'}c_{kk'}.\label{eq:sqwE}
\end{eqnarray}

Using that $\braket{m}{\lambda}$ is a $0$-eigenvector of $\Lambda^\lambda$ and re-scaling $\ket{\lambda}$ by a global phase, we obtain 
\begin{eqnarray}
\braket{m_0}{\lambda}&=& \alpha \Lambda^\lambda_{m_0 m_1} \\
\braket{m_1}{\lambda}&=& -\alpha \Lambda^\lambda_{m_0 m_0}, 
\end{eqnarray}
where $\alpha$ is a positive constant, which can be calculated using $1=\sum_{k}  \left|\braket{\psi_{k}}{\lambda}\right|^2$ and Eq.~(\ref{psi_lambda}):
\begin{align}\label{eq:alpha}
\frac{1}{\alpha^2}\,=\,(&1+\Lambda'_{m_{1}m_{1}})\left(\Lambda^\lambda_{m_0 m_0}\right)^2
-2\Lambda^\lambda_{m_0 m_0}\Re\left\{\Lambda'_{m_{0}m_{1}}\Lambda^\lambda_{m_0 m_1}
\right\} \nonumber \\
&+(1+\Lambda'_{m_{0}m_{0}})\left|\Lambda^\lambda_{m_0 m_1}\right|^2,
\end{align}
where $\Re$ the real part operator,
\begin{equation}\label{eq:Llmmp}
\Lambda^\lambda_{mm'}\,=\, \sum_k b^\lambda_k\,\psi_k(m)\psi^*_k(m'),
\end{equation}
and
\begin{equation}\label{eq:Lpmmp}
\Lambda'_{mm'}\,=\, \sum_k \left(b^\lambda_k\right)^2\psi_k(m)\psi^*_k(m').
\end{equation}
Note that because we have already calculated $\lambda$, we can now calculate explicitly $\Lambda^\lambda_{mm'}$ and $\Lambda'_{mm'}$ if we know the spectral decomposition of $U$.

\section{Searching two-dimensional lattices with 2 marked vertices}\label{sec:aba_MalhaFinita}

As an application, we calculate the time complexity of a quantum walk-based search algorithm on a $\sqrt{N}\times\sqrt{N}$ square lattice assuming that the lattice has cyclic boundary conditions with exactly 2 marked vertices. The evolution operator of a coined quantum walk with no marked vertex is
\begin{equation}\label{aba_U_grid}
	U\,=\,S\,(G\otimes I),
\end{equation}
where $G$ is the Grover coin and $S$ is the flip-flop shift operator given by
\begin{equation}
    S\ket{i,j}\ket{x,y}=\ket{1-i,1-j}\ket{x+(-1)^i,y+(-1)^j},
\end{equation}
where $i$ and $j$ are bits and the arithmetic in the second register is performed modulo $\sqrt{N}$.
To search the lattice, the modified evolution operator is $U'=UR'$, where 
\begin{equation}
	R'=I-2\ket{0'}\bra{0'}-2\ket{1'}\bra{1'}
\end{equation}
and 
\begin{eqnarray}
	\ket{0'}&=&\ket{\text{d}_c}\ket{0,0},\\
	\ket{1'}&=&\ket{\text{d}_c}\ket{x_0,y_0},
\end{eqnarray}
where one marked vertex is chosen at $m_0=(0,0)$ without loss of generality, and the second marked vertex is chosen at position $m_1=(x_0,y_0)$ so that $m_0\neq m_1$. Vector $\ket{\text{d}_c}$ is the normalized uniform superposition of the computational basis of the coined space. Note that here the Hilbert space is larger because it has been augmented by the coin space, and then all formulas of the previous section must be extended by substituting $\ket{\text{d}_c}\ket{m}$ for $\ket{m}$. The initial state $\ket{\psi(0)}$ is the uniform superposition of all states of the computational basis, that is,
\begin{equation}\label{eq:psi0_latt}
	\ket{\psi(0)}=\ket{\text{d}_c}\ket{\text{d}_p},
\end{equation}
where $\ket{\text{d}_p}$ is the normalized uniform superposition of computational basis of the position space. State $\ket{\psi(0)}$ can be generated in $O\big(\sqrt{N}\big)$ steps.

The eigenvectors of $U$ that have nonzero overlap with the marked vertices are $\ket{\psi(0)}$ with eigenvalue 1 and $\ket{\nu_{k\ell}^{\pm\theta}}\ket{\tilde{k},\tilde{\ell}}$ for $(k,\ell)\neq(0,0)$ with eigenvalues $\e^{\pm i\theta_{kl}}$, where~\cite{MPA10,Por18book}
\begin{equation}\label{pq_kappa_x_y}
    \big|{\tilde{k},\tilde{\ell}}\big\rangle=\frac{1}{\sqrt N}\sum_{x,y=0}^{\sqrt{N}-1} \omega^{x k + y \ell} \ket{x,y},
\end{equation}
and
\begin{equation}\label{pq_nu_theta}
    \ket{\nu^{\pm\theta}_{k\ell}}=
\frac{\pm\textrm{i}}{2\sqrt 2 \sin \theta_{k\ell}}
 \left[ \begin {array}{c} {{\textrm e}^{\mp\textrm{i}\theta_{k\ell}}}-{\omega}^{k}
\\\noalign{\medskip}{{\textrm e}^{\mp\textrm{i}\theta_{k\ell}}}-{\omega}^{-k}
\\\noalign{\medskip}{{\textrm e}^{\mp\textrm{i}\theta_{k\ell}}}-{\omega}^{\ell}
\\\noalign{\medskip}{{\textrm e}^{\mp\textrm{i}\theta_{k\ell}}}-{\omega}^{-\ell}
\end {array} \right],
\end{equation}
and
\begin{equation}\label{aba_cos_theta}
    \cos \theta_{k\ell}=\frac{1}{2} \left( \cos {\frac {2 \pi{k}}{\sqrt {N}}} +\cos {\frac {2 \pi{\ell}}{\sqrt{N}}}\right),
\end{equation}
where $\omega=\exp{2\pi\i/\sqrt{N}}.$

Using this list of eigenvectors and $\braket{\text{d}_c}{\nu^{\pm\theta}_{k\ell}}=1/\sqrt{2}$, Eq.~(\ref{ckkp}) \big[with the modification $\psi(m)\rightarrow \left(\bra{\text{d}_c}\bra{m}\right)\ket{\psi}$ and $k\rightarrow j,k,\ell$, where $j$ represents the coin value\big]  reduces to
\begin{equation*}
    c_{k\ell;k'\ell'}=\frac{1}{\epsilon_{k\ell}\epsilon_{k'\ell'}N^2}\left(1-\omega^{(k'-k)x_0+(\ell'-\ell)y_0}\right),
\end{equation*}
where $\epsilon_{k\ell}=1$ if $k=\ell=0$ and $\epsilon_{k\ell}=2$ otherwise. Index $j$ runs from 0 to 3 in the coin space, and can be readily simplified. Then,
we obtain $A=B=D=0$ and
\begin{align}
	C &=  -\frac{4}{N^2} \sum_{\substack{k,\ell=0\\ (k,\ell)\neq (0,0)}}^{\sqrt{N}-1} \frac{1 - \cos {\frac{2\pi({k x_0 + \ell y_0})}{\sqrt{N}} }}{1-\cos \theta_{k\ell}}\label{eq:sqwC_latt2} \\
	E &=  \frac{1}{N^2}\sum_{\substack{k,\ell=0\\ (k,\ell)\neq (0,0)}}^{\sqrt{N}-1}\sum_{\substack{k',\ell'=0\\ (k',\ell')\neq (0,0)}}^{\sqrt{N}-1}
	\frac{1 - \cos\frac{2\pi(k'-k)x_0 + 2\pi(\ell'-\ell)y_0}{\sqrt{N}}}{(1-\cos \theta_{k\ell} )(1-\cos \theta_{k'\ell'})}.\label{eq:sqwE_latt2} 
\end{align}
Note that the imaginary part of $E$ is zero because of symmetry properties. The expression for $\lambda$ reduces to
\begin{equation}\label{eq:lambda}
    \lambda\,=\,\sqrt{\frac{-C}{E}}.
\end{equation}
In order to proceed with the calculations, we introduce the constant $c\in\mathbb{R}$ such that
\begin{equation}\label{eq:S_1}
    \sum_{\substack{k,\ell=0\\ (k,\ell)\neq (0,0)}}^{\sqrt{N}-1} \frac{1}{1-\cos \theta_{k\ell}} = cN\ln{N} + O(N),
\end{equation}
where c is bounded by $2/\pi^2\leq c \leq 1$. Numerical calculations show that $c\approx0.32$.
The asymptotic behavior of $\lambda$ using two representative pairs of marked elements, namely $\big((0,0),(1,0)\big)$ and $\big((0,0),(\sqrt{N}/2,\sqrt{N}/2)\big)$, is obtained in Appendix~\ref{appendA} and is given by
\begin{align}
    \lambda &=& \begin{cases}
        \frac{\sqrt{2}}{\sqrt{c}\sqrt{N\ln{N}}}, &(x_0,y_0)=(1,0),\\
        \frac{2}{\sqrt{c}\sqrt{N\ln{N}}}, &(x_0,y_0)=\left({\sqrt{N}}/{2},{\sqrt{N}}/{2}\right).
    \end{cases}
\end{align}
Note that $\lambda\ll \theta_{k\ell}$ when $N\gg 1$ because $\theta_{k\ell}=O(1/\sqrt{N})$.

\subsection*{Case $(x_0,y_0)=(1,0)$}

Using Eqs.~(\ref{eq:sinlam}), (\ref{eq:sinlamphi}), (\ref{eq:Llmmp}), and lower order terms in the asymptotic expansion of $\lambda$, we obtain
\begin{align}
    \Lambda^\lambda_{m_0m_0}= - \Lambda^\lambda_{m_0m_1}= &-\frac{1}{\sqrt{2c}\sqrt{N\ln{N}}} +\nonumber\\
    & O\left(\frac{1}{\sqrt{N}\ln^{\frac{3}{2}}{N}}\right).
\end{align}
Using Eq.~(\ref{eq:Lpmmp}), we obtain
\begin{align}
     \Lambda'_{m_0m_0}=\Lambda'_{m_0m_1}=\Lambda'_{m_1m_1}=4c\ln{N} + O\left(1\right).
\end{align}
Replacing those results in Eq.~(\ref{eq:alpha}), we have
\begin{equation}
    \alpha = \frac{\sqrt{N}}{2\sqrt{2}}
\end{equation}
and
\begin{equation}
    \braket{0'}{\lambda}=\braket{1'}{\lambda}=\frac{1}{4\sqrt{c}\sqrt{\ln{N}}}.
\end{equation}
Taking the complex conjugate of Eq.~\eqref{psi_lambda} and replacing the above results, we obtain
\begin{equation}
    \braket{\lambda}{\psi(0)}= -\frac{\i}{\sqrt{2}} + O\left(\frac{1}{\sqrt{N\ln{N}}}\right).
\end{equation}
The equation above can be used to show that the terms $\epsilon_m$ in Eq.~(\ref{eq:sqw_p(t)}) tend to zero when $N\rightarrow\infty$.
Using Eqs.~\eqref{eq:p(t)} and~(\ref{eq:p_succ}), we have asymptotically 
\begin{equation}
    p(t)= \frac{1}{4c\ln{N}}\sin^2{\left(\frac{\sqrt{2}\,\,t}{\sqrt{c}\sqrt{N\ln{N}}}\right)}.
\end{equation}
Taking the running time as the optimal $t$, we get
\begin{equation}
    t_\text{opt} = \frac{\pi\sqrt{c}\sqrt{N\ln{N}}}{2\sqrt{2}},
\end{equation}
and
\begin{equation}
    p_\text{succ} = \frac{1}{4c\ln{N}}.
\end{equation}

\subsection*{Case $(x_0,y_0)=\left({\sqrt{N}}/{2},{\sqrt{N}}/{2}\right)$}

In this subsection, we assume that $\sqrt{N}$ is even. Using Eqs.~(\ref{eq:sinlam}), (\ref{eq:sinlamphi}), and (\ref{eq:Llmmp}), in the asymptotic limit we obtain
\begin{equation}
    \Lambda^\lambda_{m_0m_0}=-\Lambda^\lambda_{m_0m_0} =\sqrt{c}\frac{\sqrt{\ln{N}}}{\sqrt{N}} + O\left(\frac{1}{\sqrt{N\ln{N}}}\right).
\end{equation}
Using Eq.~(\ref{eq:Lpmmp}), we obtain
\begin{equation}
     \Lambda'_{m_0m_0}=\Lambda'_{m_1m_1}=3c\ln{N} + O\left(\frac{1}{\ln{N}}\right)
\end{equation}
\begin{equation}
     \Lambda'_{m_0m_1}=c\ln{N} + O\left(1\right)
\end{equation}
Replacing those results in Eq.~(\ref{eq:alpha}), we have
\begin{equation}
    \alpha = \frac{\sqrt{N}}{2c\sqrt{2}\ln{N}}
\end{equation}
and
\begin{equation}
    \braket{0'}{\lambda}=\braket{1'}{\lambda}=\frac{1}{2\sqrt{2c}\sqrt{\ln{N}}}.
\end{equation}
Taking the complex conjugate of Eq.~\eqref{psi_lambda} and replacing the above results, we obtain
\begin{equation}
    \braket{\lambda}{\psi(0)}= -\frac{\i}{\sqrt{2}} + O\left(\frac{1}{\sqrt{N\ln{N}}}\right).
\end{equation}
Using Eqs.~\eqref{eq:p(t)} and~(\ref{eq:p_succ}), we have asymptotically 
\begin{equation}
    p(t)= \frac{1}{2c\ln{N}}\sin^2{\left(\frac{2\,\,t}{\sqrt{c}\sqrt{N\ln{N}}}\right)}.
\end{equation}
Taking the running time as the optimal $t$, we get
\begin{equation}
    t_\text{opt} = \frac{\pi\sqrt{c}\sqrt{N\ln{N}}}{4},
\end{equation}
and
\begin{equation}
    p_\text{succ} = \frac{1}{2c\ln{N}}.
\end{equation}

\section{Searching hypercubes with 2 marked vertices}\label{sec:Hypercube}

As a second application, we calculate the time complexity of a quantum walk-based search algorithm on a hypercube with exactly 2 marked vertices. A hypercube has $N=2^n$ vertices whose labels are binary vectors $\vec v$. The decimal representation of $\vec v$ is in the range $0\le \vec v< N$. The evolution operator of a coined quantum walk with no marked vertex is
\begin{equation}\label{aba_U_hypercube}
	U\,=\,S\,(G\otimes I_N),
\end{equation}
where $G\in {\mathcal{H}}^n$ is the Grover coin and $S\in  {\mathcal{H}}^n\otimes  {\mathcal{H}}^N$ is the flip-flop shift operator given by
\begin{equation}
    S\ket{a}\ket{\vec v}=\ket{a}\ket{\vec v\oplus {\vec e}_a},
\end{equation}
where $1\le a\le n$ and ${\vec e}_a$ is the binary $n$-tuple with all entries zero except the $a$-th entry, whose value is 1.
To search the $n$-dimensional hypercube, the modified evolution operator is $U'=UR'$, where 
\begin{equation}
	R'=I-2\ket{0'}\bra{0'}-2\ket{1'}\bra{1'}
\end{equation}
and 
\begin{eqnarray}
	\ket{0'}&=&\ket{\text{d}_c}\ket{\vec 0},\\
	\ket{1'}&=&\ket{\text{d}_c}\ket{\vec v_0},
\end{eqnarray}
where one marked vertex is chosen at $(0,...,0)$ without loss of generality, and the second marked vertex is chosen at an arbitrary position $\vec v_0$ so that $\vec v_0\neq (0,...,0)$. Vector $\ket{\text{d}_c}$ is the normalized uniform superposition of the computational basis of the coined space. Note that the Hilbert space has been augmented by the coin space, whose basis is $\{\ket{1},...,\ket{n}\}$. The initial state $\ket{\psi(0)}$ is the uniform superposition of all states of the computational basis, that is,
\begin{equation}\label{eq:psi0_hypercube}
	\ket{\psi(0)}=\ket{\text{d}_c}\otimes\ket{\text{d}_p}=\frac{1}{\sqrt{n}}\sum_{a=1}^n\ket{a}\otimes\frac{1}{\sqrt{N}}\sum_{\vec v=0}^{N-1}\ket{\vec v}.
\end{equation}
State $\ket{\psi(0)}$ can be generated in $O\big(\sqrt{N}\big)$ steps using local operators.

\begin{table*}[ht!]
\centering
\begin{tabular}{ccccc}\\
\hline
Hamming wgt. & index~$a$ & eigenval. & $\ket{\alpha^{\vec k}_a}$ & multiplicity \\
\hline
{$k=0$}
         & $a = 1$ & $1$ & $\sum_{b=1}^{n} \ket{b}/\sqrt{n}$ & $1$ \\  
         & $a\in[2,n]$ & $-1$ & $(\ket{1}-\ket{a})/\sqrt{2}$ & $n-1$ \\
\hline
{ $1 \le k \leq n - 1$}
    & $a = q(0)$ & $\e^{\i \omega_k}$ & $\ket{\alpha_+^{\vec k}}$ & $1$\\
    & $a = q(1)$ & $\e^{-\i \omega_k}$ & $\ket{\alpha_+^{\vec k}}^*$ & $1$\\
    & $\set{ a\ |\ k_a = 1} \setminus \set{q(1)}$ & $1$ & $\left( \ket{q(1)}-\ket{a} \right)/\sqrt{2}$ & $k-1$\\
    & $\set{ a\ |\ k_a = 0} \setminus \set{q(0)}$ & $-1$ & $\left( \ket{q(0)}-\ket{a} \right)/\sqrt{2}$ & $n-k-1$\\
\hline
$k = n$
        & $a = 1$ & $-1$ & $\sum_{b=1}^n \ket{b}/\sqrt{n}$ & $1$\\ 
        & $a\in[2,n]$ & $1$ & $(\ket{1}-\ket{a})/\sqrt{2}$ & $n-1$ \\
\hline
\end{tabular}
\caption{Eigenvalues $\exp{\pm\i \omega_k}$ and eigenvectors $\ket{\alpha_a^{\vec k}} \otimes \ket{\beta_{\vec k}}$ of $U$, where $k$ is the Hamming weight of $\vec k$, $\cos \omega_k=1 - 2k/n$, $q(x)=\min\{a\ |\ k_a=x\}$, and $\ket{\alpha_+^{\vec k}}$ is given by Eq.~(\ref{pq_alpha_k+-}). }
        \label{pq_tab_eigenvalvec}
\end{table*}

The results of Sec.~\ref{ABA} can be readily employed as soon as we calculate $A$ to $E$ given by Eqs.~(\ref{eq:sqwA}) to~(\ref{eq:sqwE}). The relevant eigenvectors of $U$ are $\ket{\psi(0)}$, $\ket{\text{d}_c}\ket{\beta_{(1,...,1)}}$, and $\ket{\alpha^{\vec k}_\pm}\ket{\beta_{\vec k}}$ (see~\cite{MR02,Por18book}), where $ \ket{\alpha_-^{\vec k}} = \ket{\alpha_+^{\vec k}}^*$,
\begin{eqnarray}
    \ket{\alpha_{+}^{\vec{k}}} = \frac{\textrm{e}^{\textrm{i}\theta}}{\sqrt{2}}\sum_{a=1}^n \left(\frac{k_a}{\sqrt{k}} - \textrm{i}\frac{1-k_a}{\sqrt{n-k}}\right) \ket{a},
    \label{pq_alpha_k+-}
\end{eqnarray}
for $0< k < n$, where $k$ is the Hamming weight of $\vec{k}$, $k_a$ is the $a$-th entry of $\vec{k}$, and $\cos\theta=\sqrt{k/n}$; and
\begin{equation}\label{pq_beta_vec_k_3}
  \ket{\beta_{\vec{k}}}\equiv\frac{1}{\sqrt{N}}\sum_{\vec{v}=0}^{2^n-1}(-1)^{\vec{k}\cdot \vec{v}}\ket{\vec{v}},
\end{equation}
where $\vec{k}\cdot \vec{v}= \sum_j \vec{k}_j \vec{v}_j\mod 2$. The corresponding eigenvalues are 1, $-1$, and $\e^{\pm\i \omega_k}$. 
A (non-orthogonal) eigenbasis of $U$ has been summarized in Table~\ref{pq_tab_eigenvalvec}, but we use only the eigenvectors that have a nonzero overlap with the marked elements.
Using 
\begin{equation}\label{pq_D_alpha_veck_2}
	\left\langle \text{d}_c \Big| \alpha_{+}^{\vec{k}} \right\rangle = \left\langle \text{d}_c \Big|\alpha_-^{\vec{k}} \right\rangle = \frac{1}{\sqrt{2}}
\end{equation}
for $0< k < n$, we obtain $A = B = D = 0$, 
\begin{align}
	C &= -\frac{4n}{N^2} \sum_{\substack{\vec k = 1\\ \vec{k}\cdot\vec{v}_{0}  \text{ odd}}}^{N - 1} \frac{1}{k}  ,\label{eq:sqwC_hypercube_2} \\
	E &= \frac{n^2}{2N^2} \sum_{\substack{\vec k, \Vec{k'} = 1\\   (\vec{k} \oplus \vec{k'})\cdot\vec{v}_{0} \text{ odd}    }}^{N-1} \frac{1}{kk'}
	. \label{eq:sqwE_hypercube_2}
\end{align}
The asymptotic behavior of variables $C$ and $E$ is obtained in Appendix~\ref{appendB}, and is given by
\begin{align}
    C &= -\frac{4}{N}+O\left(\frac{1}{n}\right), \\
    E &= 1 + O\left(\frac{1}{n}\right),
\end{align}
for any location of the second marked vertex $\vec{v}_{0}$.

Using Eq.~(\ref{eq:lambdaequation}) in the asymptotic limit, we obtain
\begin{align}
    \lambda = \frac{2}{\sqrt N}.
\end{align}
Note that $\lambda\ll \omega_{k}$ when $N\gg 1$ because $\omega_{k}=O(1/\sqrt{n})$.
Using Eqs.~(\ref{eq:sinlam}), (\ref{eq:sinlamphi}), and (\ref{eq:Llmmp}),
in the asymptotic limit we obtain,
\begin{equation}
    \Lambda^\lambda_{m_0m_0} = \Lambda_{m_1m_1}^\lambda =
    - \Lambda^\lambda_{m_0m_1} = - \frac{1}{\sqrt N} + O\paren{\frac{1}{n}}.
\end{equation}
Using Eq.~(\ref{eq:Lpmmp}), we obtain
\begin{align}
    \Lambda'_{m_0m_0} = \Lambda'_{m_1m_1} = \Lambda'_{m_0m_1} + 1 =
        2 + O\paren{\frac{1}{n}}.
\end{align}
Replacing those results in Eq.~(\ref{eq:alpha}), we obtain $\alpha = \sqrt{N/8}$ and
$\braket{0'}{\lambda} = \braket{1'}{\lambda} = 1/\sqrt{8}$ asymptotically.
Using Eq.~\eqref{psi_lambda}, we obtain
\begin{equation}
    \braket{\psi(0)}{\lambda}= \frac{\i}{\sqrt 2} + O\paren{\frac{1}{n}}.
\end{equation}
The equation above can be used to show that the terms $\epsilon_m$ in Eq.~(\ref{eq:sqw_p(t)}) tend to zero when $N\rightarrow\infty$.
Using Eqs.~\eqref{eq:p(t)} and~(\ref{eq:p_succ}), we have asymptotically 
\begin{equation}
    p(t)= \frac{1}{2} \sin^2\paren{\frac{2t}{\sqrt N}}.
\end{equation}
Taking the running time as the optimal $t$, that is,
\begin{equation}
    t_\text{opt} = \frac{\pi\sqrt{N}}{4},
\end{equation}
we obtain asymptotically
\begin{equation}
    p_\text{succ} = \frac{1}{2}.
\end{equation}

\section{Numerical methods}\label{sec:hypercubeM3}

In this section, we show how to enhance numerical methods that estimate the time complexity of quantum walk-based search algorithms with multiple marked vertices using a representative example. We consider a Grover walk on hypercubes, as described in Sec.~\ref{sec:Hypercube}, but now we take an arbitrary number of marked vertices, and run Python simulations on an ordinary laptop. By using Eq.~(\ref{det_C}), we determine $\lambda$ numerically for an increasing number of marked vertices. Fig.~\ref{fig:1} shows $t_\text{opt}=\pi/(2\lambda)$ as a function of $n$ for $|M|$ equal to 3, 9, and 21. The plot shows that $t_\text{opt}$ scales as $\sqrt N$ and, by rescaling $N$ into $N/|M|$, all lines merge into the dashed line $\pi\sqrt{N}/(2\sqrt 2)$. This shows that the optimal number of steps right before measurement is the one given by Eq.~(\ref{eq:conj}) below. These results do not depend on the locations of the marked vertices.

\begin{figure}[t]
\includegraphics[width=0.48\textwidth]{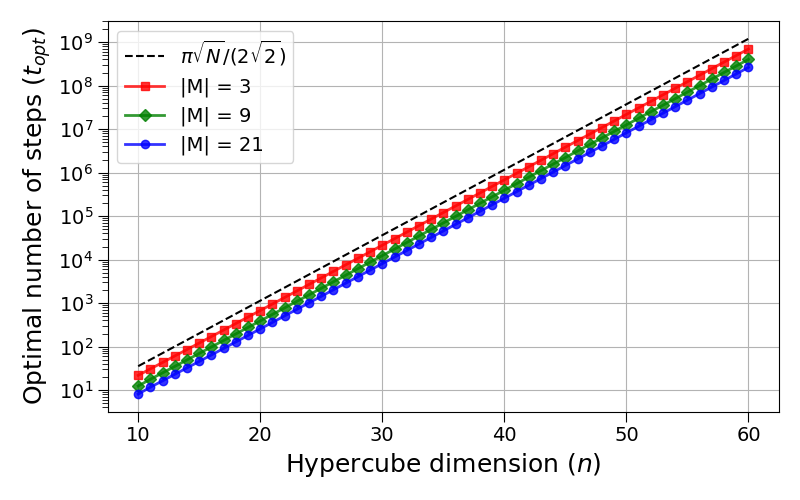}
\caption{Plot of the optimal number of steps $t_\text{opt}$ as a function of the hypercube dimension $n$ for an increasing number of marked vertices $|M|$, whose locations are chosen at random. }\label{fig:1}
\end{figure}

Next step is the analysis of the success probability $p_\text{succ}$ given by Eq.~(\ref{eq:p_succ}). Fig.~\ref{fig:2} shows $(0.5-p_\text{succ})$ as a function of $n$ for the same values of $|M|$ of Fig.~\ref{fig:1}, and $0.65/n^{1.056}$ as a function of $n$, which is a straight line in loglog scale obtained by curve fitting. We have eliminated values corresponding to $n<30$ because the high order terms of the asymptotic expansion of $p_\text{succ}$ play a relevant role for those values and cannot be fitted into a straight line. These numerical results show that the asymptotic success probability is 1/2. These results do not depend on the locations of the marked vertices.

\begin{figure}[t]
\includegraphics[width=0.48\textwidth]{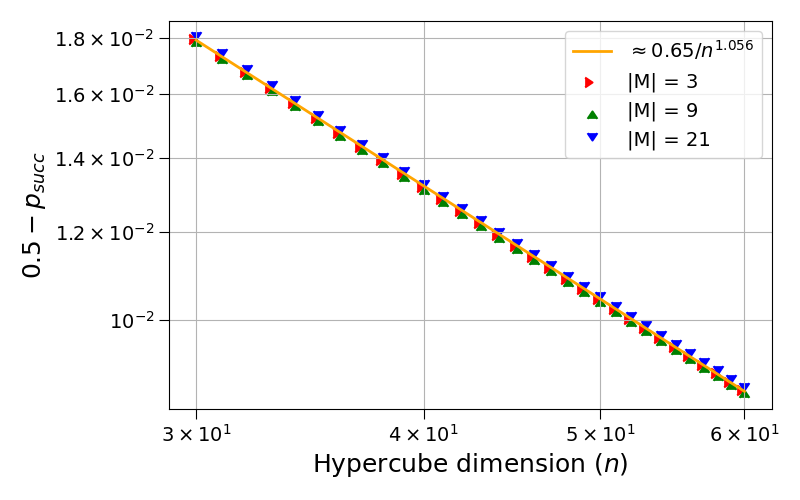}
\caption{Plot of $(0.5-p_\text{succ})$ as a function of the hypercube dimension $n$ for the same number of marked vertices of Fig.~\ref{fig:1}, where $p_\text{succ}$ is the success probability.
The axes are in loglog scale and the points are the tips of the triangles.}\label{fig:2}
\end{figure}

Without using Eq.~(\ref{det_C}), the only way to simulate the time evolution of the quantum walk requires the implementation of the coin and shift operators, which uses exponential resources as a function of $n$, and cannot be performed in the range of Fig.~\ref{fig:1} even in the largest supercomputers.
Our numerical results support the following conjecture:\vspace{5pt}

\noindent
\textbf{Conjecture.}  The asymptotic optimal running time for the Grover walk-based search algorithm on the hypercube with $|M|$ marked vertices is  \vspace{-5pt}
\begin{equation}\label{eq:conj}
    t_\text{opt}\,=\,\frac{\pi}{2\sqrt 2}\sqrt{\frac{N}{|M|}}
\end{equation}
and the asymptotic success probability is 1/2.\\

\section{Conclusions}\label{sec:conc}

We have developed an analytical method to calculate the time complexity of quantum walk-based search algorithms with multiple marked vertices. The method relies on two eigenvectors of the evolution operator associated with the eigenvalues that are closest to 1 but different from 1. The number of steps is given by $\pi/(2\lambda)$ in the simplest case when the evolution operator has real entries, where $\pm\lambda$ are the phases of the eigenvalues $\exp{\pm\i\lambda}$. Usually, the success probability decreases as a function of the number of vertices, the hypercube and the complete graph being notable exceptions. This method employs some hypotheses in order to proceed with the analytical calculations that must be checked on specific applications, and if the hypotheses are not confirmed, it means that the method cannot be used. In fact, our method may fail if the time complexity depends on all eigenvectors of the evolution operator. The method can also be used to speedup numerical analysis of search algorithms when the graph structure is too complex for an analytical approach.

We have applied our method to the Grover walk on the two-dimensional lattice and hypercube both with $N$ vertices, and we have shown that the optimal running time is $O(\sqrt{N\ln N})$ with success probability $O(1/\ln N)$ for the two-dimensional lattice and $O(\sqrt{N})$ with success probability $O(1)$ for the hypercube when they have two marked vertices. Since we have obtained the exact asymptotic limits for the running time and the success probability, we can draw further conclusions. When we compare our results with the corresponding ones for the two-dimensional lattice with one marked vertex, the behavior of the algorithm depends on the distance of the marked vertices. If we take them as far apart as possible, the success probability for the 2-marked case is the same as the 1-marked case but the running time is shorter by a factor of $\sqrt 2$. If we take the marked vertices as close as possible, the running time for the 2-marked case is the same as the 1-marked case but the success probability is smaller by a factor of $2$. Note that the presence of an extra marked vertex in the first case makes the searching easier and in the latter case makes it worse, different from what is usually expected.

When we compare our results with the corresponding ones for the hypercube with one marked vertex, the success probability is the same but the optimal number of steps for the 2-marked case is smaller by a factor of $\sqrt{2}$ for arbitrary locations of the marked vertices. Our numerical calculations show that those results can be extended for more marked vertices regardless of their locations. 

It would be interesting to apply our method to other quantum walk models, such as the staggered model~\cite{PSFG16}, and to analyze whether it can help to find exceptional configurations.

\appendix

\section{Asymptotic expressions for the two-dimensional lattice}\label{appendA}

In this Appendix we obtain simpler expressions for $C$ and $E$, and asymptotic expressions for $\lambda$.
Let us define the sums $S_1$ and $S_2$ as 
\begin{eqnarray}
	S_1 &=& \sum_{\substack{k,\ell=0\\ (k,\ell)\neq (0,0)}}^{\sqrt{N}-1}\frac{1}{1-\cos \theta_{k\ell} },
\end{eqnarray}
\begin{eqnarray}
	S_2 &=& \sum_{\substack{k,\ell=0\\ (k,\ell)\neq (0,0)}}^{\sqrt{N}-1}\frac{\sin^2\frac{\pi(kx_0+\ell y_0)}{\sqrt{N}}}{\sin^2 \frac{\pi k}{\sqrt{N}}+\sin^2 \frac{\pi \ell}{\sqrt{N}} }.
\end{eqnarray}
$S_1$ has the following bounds~\cite{Por18book,AKR05}
\begin{equation}\label{Append_sum}
     \frac{2}{\pi^2} N \ln N  \leq  S_1  \leq N \ln N
\end{equation}
up to $O(N)$ terms. 

Using Eq.~(\ref{eq:sqwC_latt2}) and the identity $\cos 2a=1-2\sin^2 a$, we obtain
\begin{equation}
    C \,=\, -\frac{8S_2}{N^2}. \label{eq:CS2}
\end{equation}
Using Eq.~(\ref{eq:sqwE_latt2}) and the antisymmetry of the sine function, the expression for $E$ can be simplified to
\begin{equation}\label{eq:E_S_2}
    E \,=\, \frac{4S_2}{N^2}\left(S_1-S_2\right). 
\end{equation}
Using Eq.~(\ref{eq:lambda}), we obtain
\begin{equation}
    \lambda \,=\, \frac{\sqrt 2}{\sqrt{S_1-S2}}.
\end{equation}
Let us proceed with two kinds of pairs of marked points.

\subsection*{Case $(x_0,y_0)=(1,0)$}

If $(x_0,y_0)=(1,0)$, then
\begin{eqnarray}
	S_2 &=& \sum_{\substack{k,\ell=0\\ (k,\ell)\neq (0,0)}}^{\sqrt{N}-1}\frac{\sin^2\frac{\pi k}{\sqrt{N}}}{\sin^2 \frac{\pi k}{\sqrt{N}}+\sin^2 \frac{\pi \ell}{\sqrt{N}} }.
\end{eqnarray}
Note that if we interchange $k$ and $\ell$ inside the sum, we obtain the same result, which can be used to obtain
\begin{equation}
    S_2=\frac{N-1}{2}.
\end{equation}
Using Eq.~(\ref{eq:S_1}), we obtain
\begin{equation}
    \lambda \,=\, \frac{\sqrt 2}{\sqrt{cN\ln N +O(N)}}.  
\end{equation}

\subsection*{Case $(x_0,y_0)=(\sqrt{N}/2,\sqrt{N}/2)$}

In this case, we assume that $\sqrt N$ is even. Replacing $(x_0,y_0)=(\sqrt{N}/2,\sqrt{N}/2)$ into $(S_1-S_2)$, we obtain
\begin{eqnarray}
	S_1-S_2 &=& \frac{1}{2}\sum_{\substack{k,\ell=0\\ (k,\ell)\neq (0,0)}}^{\sqrt{N}-1}\frac{1+(-1)^{k+l}}{\sin^2 \frac{\pi k}{\sqrt{N}}+\sin^2 \frac{\pi \ell}{\sqrt{N}} }.
\end{eqnarray}
Using $4k^2/\pi^2\le \sin^2 k\le k^2$, for $-\pi/2\le k\le \pi/2$ and $(k+\ell)^2/2\le k^2+\ell^2 \le (k+\ell)^2$, we obtain $S_1-S_2=cNS_3+O(N)$, where~\cite{note1}
\begin{equation}
    S_3 = \sum_{\substack{k,\ell=0\\ (k,\ell)\neq (0,0)}}^{\sqrt{N}/2}\frac{1+(-1)^{k+l}}{\left({k}+{\ell}\right)^2 }= \frac{\ln N}{2}+O\left(1\right).
\end{equation}Using Eq.~(\ref{eq:S_1}), we obtain
\begin{equation}
    \lambda \,=\, \frac{2}{\sqrt{cN\ln N +O(N)}}.  
\end{equation}

\section{Asymptotic expressions for the hypercube}\label{appendB}

In this Appendix we obtain asymptotic expressions for $C$ and $E$ given by Eqs.~(\ref{eq:sqwC_hypercube_2}) and~(\ref{eq:sqwE_hypercube_2}), respectively. Let us define $S_\text{odd}$ and $S_\text{even}$ as
\begin{eqnarray*}
    S_\text{odd} &=& \frac{n}{N}\sum_{\substack{\vec k = 1\\ \vec{k}\cdot\vec{v}_{0} \text{ odd}\substack}}^{N - 1} \frac{1}{\big|\vec k\big|},\\
    S_\text{even} &=& \frac{n}{N}\sum_{\substack{\vec k = 1\\ \vec{k}\cdot\vec{v}_{0} \text{ even}\substack}}^{N - 1} \frac{1}{\big|\vec k\big|},
\end{eqnarray*}
where $\big|\vec k\big|$ is the Hamming weight of $\vec k$ and $N=2^n$.
Using Eq.~(\ref{eq:sqwC_hypercube_2}), we have that 
$$
C=-\frac{4\,S_\text{odd}}{N}.
$$
Using Eq.~(\ref{eq:sqwE_hypercube_2}) and that $(\vec{k}\oplus\vec{k'})\cdot\vec{v}_{0}$ is odd in two cases: (1)~$\vec{k}\cdot\vec{v}_{0}$ is even and $\vec{k'}\cdot\vec{v}_{0}$ is odd, (2)~$\vec{k}\cdot\vec{v}_{0}$ is odd and $\vec{k'}\cdot\vec{v}_{0}$ is even, we obtain
\begin{equation*}
    	E =S_\text{odd}\,S_\text{even}.
\end{equation*}
Now we use the following lemmas to show that asymptotically $S_\text{odd}=\,S_\text{even}=1$ for any $\vec v_0\neq (0,...,0)$.

\begin{lemma}\label{appen_lem_2}
Let $n$ be a positive integer. Then
\begin{equation*}
   \frac{n}{2^n} \sum_{k=1}^{n} \frac{1}{k} \binom{n}{k} \,=\,
   2 \sum_{k=0}^\infty \frac{a_k}{n^k} -\frac{n}{2^n}\left(H_n+\frac{2}{n}\right)- O\left(\frac{1}{2^{n}}\right) ,
\end{equation*}
where $a_k$ is the $k$-th ordered Bell number (Fubini numbers~\cite{note2}, where $a_0=a_1=1,a_2=3,...$), and $H_n$ is the $n$-th harmonic number.
\end{lemma}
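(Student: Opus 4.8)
The plan is to reduce the binomial sum to an elementary power sum, reindex it so that the large-$n$ behavior is governed by a geometric weight, and then read off the expansion from the known series representation of the Fubini numbers, reserving the delicate part for a careful separation of power-law and exponentially small remainders.

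First I would replace $\frac1k$ by the integral $\int_0^1 x^{k-1}\,dx$, so that $\sum_{k=1}^n \frac1k\binom{n}{k}=\int_0^1 \frac{(1+x)^n-1}{x}\,dx$. Substituting $u=1+x$ turns the integrand into $\frac{u^n-1}{u-1}=\sum_{j=0}^{n-1}u^j$, and integrating term by term gives the clean identity $\sum_{k=1}^n \frac1k\binom{n}{k}=\sum_{j=1}^n \frac{2^j-1}{j}$. Multiplying by $n/2^n$ and splitting off the $-1$ produces the harmonic contribution $-\frac{n}{2^n}H_n$ together with the main piece $T:=\frac{n}{2^n}\sum_{j=1}^n \frac{2^j}{j}$. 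Reindexing by $i=n-j$ I would rewrite $T=\sum_{i=0}^{n-1}\frac{2^{-i}}{1-i/n}$, which exposes the geometric damping $2^{-i}$ that makes only small $i$ contribute.

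Next I would expand $\frac{1}{1-i/n}=\sum_{m=0}^{M-1}(i/n)^m+\frac{(i/n)^M}{1-i/n}$, interchange the two finite sums, and recognize the inner sums $\sum_{i\ge0}i^m 2^{-i}=2a_m$, the series representation of the ordered Bell numbers coming from the exponential generating function $1/(2-e^x)=\sum_{j\ge0}e^{jx}/2^{j+1}$. This yields the leading series $2\sum_{k}a_k/n^k$, which I would treat as an asymptotic (divergent) expansion, so the stated identity is read order by order with power-law remainders absorbed into the series notation.

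The hard part will be the error analysis: bookkeeping the difference between the finite sum $\sum_{i=0}^{n-1}$ and the infinite sum $\sum_{i\ge0}$, together with the remainder of the geometric expansion near $i=n$, where it converges slowly. The damping $2^{-i}$ is what saves the day. The geometric remainder $R_M=\sum_{i=0}^{n-1}\frac{2^{-i}(i/n)^M}{1-i/n}$ is $O(1/n^M)$ and is absorbed into the asymptotic series, while the truncation tails $t_m=\sum_{i\ge n}i^m 2^{-i}$ are exponentially small. I would isolate the single exact tail $t_0=\sum_{i\ge n}2^{-i}=2/2^n$, which after multiplication by $n/2^n$ reproduces the stated $\frac{n}{2^n}\cdot\frac2n$, and bound the remaining exponential corrections $\sum_{m\ge1}n^{-m}t_m=O(1/2^n)$, giving the claimed $-O(1/2^n)$.
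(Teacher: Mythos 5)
Your proposal is correct and follows essentially the same route as the paper: reduce $\sum_{k}\binom{n}{k}/k$ to $\sum_{j}(2^j-1)/j$, reverse the summation index to expose the geometric damping $2^{-i}$, expand the factor $1/(1-i/n)$ as a power series, and identify the inner sums $\sum_{i\ge 0} i^m 2^{-i} = 2a_m$ with the Fubini numbers. The only differences are improvements in rigor rather than in strategy: you actually prove the binomial identity that the paper merely cites from an external reference, you make explicit that the Fubini series must be read as a (divergent) asymptotic expansion, and you account for the $2/2^n$ term cleanly as the $m=0$ truncation tail, where the paper instead folds $H_n + 2/n$ into its auxiliary quantity $s$ from the outset.
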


\begin{proof}
Define~\cite{note3}
\begin{equation*}
     s=\frac{n}{2^{n+1}} \sum_{k=1}^{n} \frac{1}{k} \binom{n}{k} + \frac{n}{2^{n+1}}\left(H_n+\frac{2}{n}\right).
\end{equation*}
Using the definition of $H_n$ and identity
\[
\sum_{k=1}^n\frac{1}{k}\binom{n}{k}=\sum_{k=1}^n\frac{2^k-1}{k},
\]
we obtain
\begin{equation*}
     s=\frac{n}{2^{n+1}} \left( \frac{2}{n}+\sum_{k=1}^{n} \frac{2^k}{k}\right) .
\end{equation*}
Reversing the sum and using $1/2^n=1-\sum_{i=0}^{n-1}1/2^{i+1}$, we obtain
\begin{equation*}
     s=1+\sum_{i=0}^{n-1} \frac{i}{2^{i+1}(n-i)}.
\end{equation*}
Using $1/(n-i)=\sum_{k=0}^\infty i^k/n^{k+1},$ asymptotically we obtain
\begin{equation*}
     s= \sum_{k=0}^\infty \frac{a_k}{n^k}- O\left(\frac{1}{2^{n+1}}\right),
\end{equation*}
where
\begin{equation*}
     a_k=\frac{1}{2}\sum_{i=0}^{\infty} \frac{i^k}{2^{i}},
\end{equation*}
is the $k$th Fubini number.
\end{proof}


\begin{lemma}\label{appen_lem_3}
Let $n$ be a positive integer, $\vec v$ be a $n$-bit vector, and $v$ be the Hamming weight of $\vec v$. Then
\begin{eqnarray*}
\sum_{\substack{\vec k = 1\substack}}^{2^n - 1} \frac{(-1)^{\vec{k}\cdot\vec{v}}}{\big|\vec k\big|}   &=&  {\binom{n}{v}}^{-1}\sum_{k=1}^{n-v}\frac{1}{k}\binom{n}{v+k} -\left( \psi \left( v+1 \right)+\gamma \right),
\end{eqnarray*}
where $|\vec k|$ is the Hamming weight of $\vec k$, $\psi$ is the digamma function, $\gamma$ is Euler's constant, and the left-hand sum runs over all $n$-bit vectors different from the null vector.
\end{lemma}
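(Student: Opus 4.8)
The plan is to evaluate the left-hand sum directly, exploiting the permutation symmetry of the Hamming weight to reduce it to a finite double binomial sum that collapses through a Beta-function (partial-fraction) identity. First I would observe that the summand depends on $\vec v$ only through $v=|\vec v|$ and through $\vec k\cdot\vec v \bmod 2$, both invariant under a simultaneous permutation of the $n$ coordinates of $\vec k$ and $\vec v$; hence I may take $\vec v=(1,\dots,1,0,\dots,0)$ with the first $v$ entries equal to $1$. Writing $i$ for the number of $1$-bits of $\vec k$ among the first $v$ coordinates and $j$ for the number among the remaining $n-v$ coordinates, the sign is $(-1)^{\vec k\cdot\vec v}=(-1)^i$, the Hamming weight is $|\vec k|=i+j$, and the number of $\vec k$ with prescribed $(i,j)$ is $\binom{v}{i}\binom{n-v}{j}$. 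Excluding only the null vector, i.e. $(i,j)=(0,0)$, the left-hand side becomes the finite double sum $\sum_{(i,j)\neq(0,0)}(-1)^i\binom{v}{i}\binom{n-v}{j}/(i+j)$, whose order of summation I may freely interchange.

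Next I would perform the inner sum over $i$ for each fixed $j$. For $j\ge 1$ the inner sum is $\sum_{i=0}^{v}(-1)^i\binom{v}{i}/(i+j)$, which I would evaluate using $1/(i+j)=\int_0^1 t^{i+j-1}\,dt$; the binomial theorem turns the integrand into $t^{j-1}(1-t)^v$, so the inner sum equals the Beta integral $B(j,v+1)=\tfrac1j\binom{v+j}{v}^{-1}$. The value $j=0$ must be treated separately, because there the forbidden index $(i,j)=(0,0)$ is omitted; the surviving terms give $\sum_{i=1}^{v}(-1)^i\binom{v}{i}/i=-H_v$, a classical finite identity. Collecting the two contributions yields $\sum_{j=1}^{n-v}\tfrac1j\binom{n-v}{j}\binom{v+j}{v}^{-1}-H_v$.

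Finally I would simplify the coefficient by a routine factorial cancellation, $\binom{n-v}{j}\binom{v+j}{v}^{-1}=\binom{n}{v}^{-1}\binom{n}{v+j}$, recasting the surviving sum as $\binom{n}{v}^{-1}\sum_{k=1}^{n-v}\tfrac1k\binom{n}{v+k}$, and then use $H_v=\psi(v+1)+\gamma$ to reach the stated right-hand side. The main obstacle is not the algebra but the boundary bookkeeping at $j=0$: this is precisely the term that removes the diagonal singularity of $1/(i+j)$ at $(0,0)$ and simultaneously produces the harmonic (digamma) contribution, so the identity hinges on separating it cleanly from the generic $j\ge 1$ terms and invoking the two finite binomial--harmonic identities correctly.
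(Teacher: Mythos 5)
Your proposal is correct and follows essentially the same route as the paper's proof: both exploit permutation symmetry to reduce the sum to weights $(i,j)$ inside and outside the support of $\vec v$, split off the $j=0$ (outside-part-zero) case to obtain $\sum_{k'=1}^{n-v}\sum_{k=0}^{v}\frac{(-1)^k}{k+k'}\binom{v}{k}\binom{n-v}{k'}+\sum_{k=1}^{v}\frac{(-1)^k}{k}\binom{v}{k}$, and then finish with binomial identities. In fact your write-up is more complete than the paper's, which ends with ``after simplifying the sums and using some binomial identities,'' whereas you supply those identities explicitly (the Beta-integral evaluation $\sum_{i=0}^{v}(-1)^i\binom{v}{i}/(i+j)=B(j,v+1)$, the identity $\sum_{i=1}^{v}(-1)^i\binom{v}{i}/i=-H_v$, and the factorial cancellation $\binom{n-v}{j}\binom{v+j}{v}^{-1}=\binom{n}{v}^{-1}\binom{n}{v+j}$).
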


\begin{proof}
After permuting $k_1$,...,$k_n$, we obtain
\[
\sum_{\substack{\vec k = 1\substack}}^{2^n - 1} \frac{(-1)^{\vec{k}\cdot\vec{v}}}{\big|\vec k\big|} =
\sum_{k_1=0}^1\cdots\sum_{k_v=0}^1\cdots\sum_{k_n=0}^1 \frac{(-1)^{k_1+...+k_v}}{k_1+...+k_n}
\]
with an additional constraint that $(k_1,...,k_n)\neq (0,...,0)$. This result can be simplified to
\[
\sum_{\substack{\vec k = 1\substack}}^{2^n - 1} \frac{(-1)^{\vec{k}\cdot\vec{v}}}{\big|\vec k\big|} =
\sum_{\vec{k}_b=1}^{2^{n-v}-1}\sum_{\vec{k}_a=0}^{2^{v}-1}\frac{(-1)^{|\vec{k}_a|}}{|\vec{k}_a|+|\vec{k}_b|} + \sum_{\vec{k}=1}^{2^{v}-1}\frac{(-1)^{|\vec{k}|}}{|\vec{k}|},
\]
where the second term on the right-hand side is obtained from the case $(k_{v+1},...,k_n)=(0,...,0)$. The sums with vector indices can be simplified into
\[
\sum_{\substack{\vec k = 1\substack}}^{2^n - 1} \frac{(-1)^{\vec{k}\cdot\vec{v}}}{\big|\vec k\big|} =
\sum_{k'=1}^{n-v}\sum_{k=0}^{v}\frac{(-1)^k}{k+k'}\binom{v}{k}\binom{n-v}{k'}+\sum_{k=1}^v\frac{(-1)^k}{k}\binom{v}{k}.
\]
After simplifying the sums and using some binomial identities, we complete the proof.
\end{proof}

Using 
$$ \sum_{\substack{\vec k = 1\substack}}^{N - 1} \frac{1}{\big|\vec k\big|}=\sum_{\substack{k=1}}^{n} \frac{1}{k} \binom{n}{k},$$
we obtain
\begin{eqnarray*}
   S_\text{even}+S_\text{odd}&=& \frac{n}{N}\sum_{\substack{k=1}}^{n} \frac{1}{k} \binom{n}{k}, \\
   S_\text{even}-S_\text{odd}&=& \frac{n}{N}\sum_{\substack{\vec k = 1\substack}}^{N - 1} \frac{(-1)^{\vec{k}\cdot\vec{v}_{0}}}{\big|\vec k\big|}.
\end{eqnarray*}
Using Lemma~\ref{appen_lem_2}, we conclude that 
$$
 S_\text{even}+S_\text{odd} = 2 + O\left(\frac{1}{n}\right).
$$
To find the asymptotic expression of $ S_\text{even}-S_\text{odd}$, we split the analysis in the following cases.

\subsubsection*{Case 1: $v$ fixed}

If the Hamming weight $v$ of $\vec v_0$ is fixed, that is, $v$ does not depend on $n$, we use the inequality
\begin{equation*}
    \sum_{k=1}^{n-v}\frac{1}{k}\binom{n}{v+k} \le \sum_{k=1}^{n-v}\binom{n}{v+k}=  2^n-\sum_{k=1}^v\binom{n}{k}-1
\end{equation*}
and Lemma~\ref{appen_lem_3} to show that
$$
 S_\text{even}-S_\text{odd} = O\left(\frac{1}{n}\right),
$$
if $v> 1$. We obtain the same result for the case $v=1$ by using the inequality
\begin{align*}
    \sum_{k=1}^{n-1}\frac{1}{k}\binom{n}{k+1} &\le \frac{3}{n+1}\sum_{k=1}^{n-1}\binom{n+1}{k+2}\\
    &=  \frac{3}{n+1}\left(2\,2^n-\frac{n(n+1)}{2}-2\right).
\end{align*}

\subsubsection*{Case 2: $v$ depends on $n$}

Here we assume that $v$ depends on $n$ and $\lim_{n\rightarrow\infty}v=\infty,$ where $v=|\vec v_0|$.
Using the inequality\vspace{-5pt}
\[
\binom{n}{v+k}\le \binom{n}{v}\binom{n-v}{k},
\]
we have
\begin{equation*}
     \binom{n}{v}^{-1}\sum_{k=1}^{n-v}\frac{1}{k}\binom{n}{v+k} \le  \sum_{k=1}^{n-v}\frac{1}{k}\binom{n-v}{k}.
\end{equation*}
Using the inequality above, Lemma~\ref{appen_lem_2} (substituting $n-v$ for $n$), and  Lemma~\ref{appen_lem_3}, we obtain
$$
 S_\text{even}-S_\text{odd} = O\left(\frac{1}{n}\right),
$$
when $1\le v< n$. We obtain the same result for the case $v=n$ by using  Lemma~\ref{appen_lem_3} alone.

\onecolumngrid

\

\section{Case $|M|=3$}

Suppose that $M=\{m_{0},m_{1},m_{2}\}$. Eq.~(\ref{det_C}) implies that 
\begin{equation}
\sum_{k k' k''}b^\lambda_k b^\lambda_{k'} b^\lambda_{k''}c_{k k'}=0,
\end{equation}
where
\begin{equation}
c_{k k' k''}\,=\,\psi_k(m_{0})\psi_{k'}(m_{1})\psi_{k''}(m_{2})\det\left(\psi^*_k(m)\right).
\end{equation}
Suppose that $\lambda\ll \phi_\textrm{min}$ when $N\gg 1$, where $\phi_\textrm{min}$ is the smallest positive value of $\phi_{k}$. We will check the validity of those assumptions in specific applications. Let us split the sum into the following parts
\begin{equation}\label{exactsum_kl_3}
 \sum_{k k'}=\sum_{\substack{\phi_k=0\\ \phi_{k'}=0\\\phi_{k''}=0}}+\sum_{\substack{\phi_k=0\\ \phi_{k'}= 0\\ \phi_{k''}\neq 0}}+\sum_{\substack{\phi_k=0\\ \phi_{k'}\neq 0\\ \phi_{k''}= 0}}+\sum_{\substack{\phi_k\neq 0\\ \phi_{k'}= 0\\ \phi_{k''}= 0}}+\sum_{\substack{\phi_k\neq 0\\ \phi_{k'}\neq 0\\\phi_{k''}=0}}+\sum_{\substack{\phi_k\neq 0\\ \phi_{k'}= 0\\ \phi_{k''}\neq 0}}+\sum_{\substack{\phi_k=0\\ \phi_{k'}\neq 0\\ \phi_{k''}\neq 0}}+\sum_{\substack{\phi_k\neq 0\\ \phi_{k'}\neq 0\\ \phi_{k''}\neq 0}},
\end{equation} 
corresponding to the sum of terms such that $\phi_k=0$ or $\phi_k\neq 0$. Using Eqs.~(\ref{eq:sinlam}) and~(\ref{eq:sinlamphi}),
we obtain 
\begin{equation}\label{eq:lambdaequation3}
	\frac{A}{\lambda^3}+\frac{B}{\lambda^2}+\frac{C}{\lambda}+D+E\lambda= O(\lambda^2),
\end{equation}
where
\begin{eqnarray}
	A &=&  8\sum_{\substack{\phi_k=0\\ \phi_{k'}=0 \\ \phi_{k''}=0}} c_{k k' k''}  ,\label{eq:sqwA3} 
\end{eqnarray}
\begin{eqnarray}
	B &=& 4\sum_{\substack{\phi_k=0\\ \phi_{k'}= 0\\ \phi_{k''}\neq 0}} a_{k''} c_{k k' k''}\sin\phi_{k''} + 4
	\sum_{\substack{\phi_k=0\\ \phi_{k'}\neq 0\\ \phi_{k''}= 0}} a_{k'} c_{k k' k''}\sin\phi_{k'}+ 4
	\sum_{\substack{\phi_k\neq 0\\ \phi_{k'}= 0\\ \phi_{k''}= 0}} a_{k} c_{k k' k''}\sin\phi_{k},\label{eq:sqwB3} 
\end{eqnarray}
\begin{eqnarray}
C &=& 2\sum_{\substack{\phi_k\neq 0\\ \phi_{k'}\neq 0\\\phi_{k''}=0}} a_{k} a_{k'} c_{k k' k''} \sin\phi_k \sin\phi_{k'}
	+2\sum_{\substack{\phi_k\neq 0\\ \phi_{k'}= 0\\ \phi_{k''}\neq 0}} a_{k} a_{k''} c_{k k' k''} \sin\phi_k \sin\phi_{k''}+\nonumber \\
	&&2\sum_{\substack{\phi_k=0\\ \phi_{k'}\neq 0\\ \phi_{k''}\neq 0}} a_{k'} a_{k''} c_{k k' k''} \sin\phi_{k'} \sin\phi_{k''}+
	4\sum_{\substack{\phi_k=0\\ \phi_{k'}= 0\\ \phi_{k''}\neq 0}} a_{k''} c_{k k' k''} +4\sum_{\substack{\phi_k=0\\ \phi_{k'}\neq 0\\ \phi_{k''}= 0}} a_{k'} c_{k k' k''}+ \nonumber \\
	&&	4\sum_{\substack{\phi_k\neq 0\\ \phi_{k'}= 0\\ \phi_{k''}= 0}} a_{k} c_{k k' k''},\label{eq:sqwC3} 
\end{eqnarray}
\begin{eqnarray}
D &=& 2\sum_{\substack{\phi_k\neq 0\\ \phi_{k'}\neq 0\\\phi_{k''}=0}} a_{k} a_{k'} c_{k k' k''} (\sin\phi_k + \sin\phi_{k'})
	+2\sum_{\substack{\phi_k\neq 0\\ \phi_{k'}= 0\\ \phi_{k''}\neq 0}} a_{k} a_{k''} c_{k k' k''} (\sin\phi_k + \sin\phi_{k''})+\nonumber \\
	&&2\sum_{\substack{\phi_k=0\\ \phi_{k'}\neq 0\\ \phi_{k''}\neq 0}} a_{k'} a_{k''} c_{k k' k''} (\sin\phi_{k'} + \sin\phi_{k''})+
	\sum_{\substack{\phi_k\neq 0\\ \phi_{k'}\neq 0\\ \phi_{k''}\neq 0}}\,a_{k}\,a_{k'}\,a_{k''}c_{k k' k''}\sin\phi_k\sin\phi_{k'}\sin\phi_{k''},\label{eq:sqwD3} 
\end{eqnarray}
\begin{eqnarray}
E &=& \sum_{\substack{\phi_k\neq 0\\ \phi_{k'}\neq 0\\ \phi_{k''}\neq 0}} a_{k}a_{k'}a_{k''} c_{k k' k''}\left( \sin\phi_k  \sin\phi_{k'} + \sin\phi_k  \sin\phi_{k''} + \sin\phi_{k'}  \sin\phi_{k''}  \right)   + \nonumber \\ 
	&&2\sum_{\substack{\phi_k\neq 0\\ \phi_{k'}\neq 0\\\phi_{k''}=0}} a_{k} a_{k'} c_{k k' k''} + 2\sum_{\substack{\phi_k\neq 0\\ \phi_{k'}= 0\\ \phi_{k''}\neq 0}} a_{k} a_{k''} c_{k k' k''} + 2\sum_{\substack{\phi_k=0\\ \phi_{k'}\neq 0\\ \phi_{k''}\neq 0}} a_{k'} a_{k''} c_{k k' k''},\label{eq:sqwE3}
\end{eqnarray}
and $a_k$ is given by Eq.~(\ref{eq:a_k}).

\section*{Acknowledgements}
 This study was financed in part by CAPES, FAPERJ grant number CNE E-26/202.872/2018, and CNPq grant numbers 407635/2018-1 and 140897/2020-8.

\end{document}